\newcommand\id{\leavevmode\hbox{\small1\kern-3.3pt\normalsize1}}
\newtheorem{theorem}{Theorem}
\newtheorem{lemma}[theorem]{Lemma}
\begin{document}

%\preprint{APS/123-QED}

\title{Non-negativity of conditional von Neumann entropy and global unitary operations}

\author{Subhasree Patro}
\email{subhasree.patro@research.iiit.ac.in}
\affiliation{Center for Security, Theory and Algorithmic Research, International Institute of Information 
Technology-Hyderabad, Gachibowli, Telangana-500032, India.}

\author{Indranil Chakrabarty}
\email{indranil.chakrabarty@iiit.ac.in}
\affiliation{Center for Security, Theory and Algorithmic Research, International Institute of Information 
Technology-Hyderabad, Gachibowli, Telangana-500032, India.}

\author{Nirman Ganguly}
\affiliation{$^2$Physics and Applied Mathematics Unit, Indian Statistical Institute, 203 B. T. Road, Kolkata 700108, India.}
\email{nirmanganguly@gmail.com}
\thanks{ At present on leave from Department of Mathematics, Heritage Institute of Technology, Kolkata-107,India}

\date{\today}
\begin{abstract}
Conditional von Neumann entropy is an intriguing concept in quantum information theory. In the present work, we examine the effect of global unitary operations on the conditional entropy of the system. We start with the set containing states with non-negative conditional entropy and find that some states preserve the non-negativity under unitary operations on the composite system. We call this class of states as Absolute Conditional von Neumann entropy Non Negative class (\textbf{ACVENN}). We are able to characterize such states for $2\otimes 2$ dimensional systems.
On a different perspective the characterization accentuates the detection of states whose conditional entropy becomes negative after the global unitary action. Interestingly, we show that this \textbf{ACVENN} class of states forms a set which is convex and compact. This feature enables the existence of hermitian witness operators. With these we can distinguish the unknown states which will have negative conditional entropy after the global unitary operation. We also show that this has immediate application in super dense coding and state merging as negativity of conditional entropy plays a key role in both these information processing tasks. Some illustrations followed by analysis are also provided to probe the connection of such states with absolutely separable (\textbf{AS}) states and absolutely local (\textbf{AL}) states.
\begin{comment}
\begin{description}
\item[Usage]
Secondary publications and information retrieval purposes.
\item[PACS numbers]
May be entered using the \verb+\pacs{#1}+ command.

%item[Structure]
%ou may use the \texttt{description} environment to structure your abstract;
%use the optional argument of the \verb+\item+ command to give the category of each item. 
\end{description}
\end{comment}
\end{abstract}

\pacs{Valid PACS appear here}% PACS, the Physics and Astronomy
                             % Classification Scheme.
%\keywords{Suggested keywords}%Use showkeys class option if keyword
                              %display desired
\maketitle

%\tableofcontents

\section{\label{sec:level1}Introduction}

Entanglement\cite{einstein} which lies at the heart of quantum mechanics is not only of deep philosophical interest \cite{bell} but also established as the most pivotal resource in various information processing tasks, like teleportation \cite{ben3}, super dense coding \cite{ben2}, key generation \cite{ekert, ben1}, secret sharing \cite{secretSharing}, remote entanglement distribution \cite{sazim-tele} and many more \cite{entanglement and others, bose,broad}.  However , not all entangled states can be directly used for an information processing task, pertinent mentions in this regard are the bound entangled states \cite{bound}. However, these entangled states are available when we go beyond $2 \otimes 2$ and $2 \otimes 3$ system, where we do not have necessary sufficient condition like Peres-Horodecki criterion \cite{peres-hor} for detection of entanglement. Some entangled states have to be processed by local filtering \cite{filt}  before they can be used in a task. As a consequence , telportation witnesses, thermodynamical witnesses\cite{wit} have been devised which can identify useful entangled states for various tasks. In multi qubit systems concepts like 'task oriented entangled' states \cite{pankaj} have been introduced.\\
\indent The ubiquitous role of entanglement in information processing tasks has motivated recent research in the generation of entangled states from separable states. Global unitary operations can play a significant role in this scenario as local unitaries cannot generate entanglement. However, there are some separable states termed as absolutely separable\cite{apss1} from which no entanglement can be produced even with any arbitrary global unitary operation. Characterization of such states has been an active line of research in recent times \cite{apss2}. This notion of ``absoluteness" was extended to define absolutely Bell-CHSH local states and absolute unsteerability \cite{nir1,nir2}. The notion of ``absoluteness" indicates that the state preserves a certain characteristic trait under global unitary transformations. For absolutely separable states it is separability, for absolutely Bell-CHSH local states it is their nature of being Bell-CHSH local.\\
\indent Conditional von Neumann entropy is another such characteristic trait of quantum states. Unlike its classical counterpart this quantity can be negative \cite{Neilsen} , providing yet again a departure from classical information theory. An operational interpretation of the quantum conditional entropy was provided in \cite{winter} , in terms of state merging. The negativity of the conditional entropy also indicates the signature of entanglement, although the converse of the statement is not true as there are entangled states with non-negative conditional entropy. Conditional entropy also plays a key role in dense coding \cite{distdc} , as a bipartite quantum state is useful for dense coding in a sense that it will have quantum advantage if and only if it has a negative conditional entropy.\\
\indent Negativity of conditional entropy being such an important yardstick, our present work probes whether it is always possible to start with a state having non-negative conditional entropy and arrive at a state having negative conditional entropy via global unitaries. We find that there is a class of states which preserve the non-negativity of the conditional entropy under global unitary transformations. 
\begin{comment}
We characterize such structures in terms of the von Neumann entropy of the entire state and also in terms of bloch parameters. 
\end{comment}
The characterization also enables one to identify useful states whose conditional entropy becomes negative with a global unitary. It is interesting to find that this class of state \textbf{ACVENN} which preserve the non negativity of the conditional entropy, is convex and compact set. This in principle guarantees us to create the witness operator to detect these states which can arrive at negative conditional entropy in spite of starting with non negative conditional entropy using global unitary operations.
\begin{comment}
We construct such a witness operator geometrically
\end{comment}
%Since, global unitaries bring a change in the global basis of the state, our work also provides an analysis of the trade-off between change of basis and the conditional von Neumann entropy. 
Since separability and non-locality are also important distinctive features of quantum mechanics, we also discuss the connections of these states preserving the non-negativity of conditional entropy under global unitary with the absolutely separable states \textbf{AS} and the recently introduced absolutely Bell-CHSH local states \textbf{AL} \cite{nir1}.\\
\indent Our work has immediate  applications in the information processing tasks like super dense coding \cite{ben2} and state merging \cite{winter}. In super dense coding states with negative conditional entropy gives us quantum advantages while in state merging same states are useful as potential future resource. One starts with some seemingly useless states having non-negative conditional entropy, then using global unitaries as a resource one can turn those states into states having a negative conditional entropy. Since  \textbf{ACVENN} class is convex and compact, it is in principle possible to create witness  operator to detect these transformed states.\\
\indent In section \ref{sec:UD} we give an introduction to all the related concepts that are relevant to this article. In section \ref{sec:CHARAC} we give a general necessary and sufficient condition to characterize \textbf{ACVENN} class of states in the state space of two qubit systems. In section \ref{sec:CONVEXandCOMP}, we show that this  \textbf{ACVENN} class of states is convex and compact which in principle allows to construct the witness operator for identifying those states which do not belong to this class.
\begin{comment}
In section \ref{sec:REL} we yet give a charecterization of this state in terms of the  bloch parameters of the general two qubit density matrix.
\end{comment}
In section \ref{sec:REL} we connect this \textbf{ACVENN} class of states with absolutely separable \textbf{AS} and absolutely local state \textbf{AL} states. In section \ref{sec:APPL} we show the potential application of charecterizing such states in various information processing tasks like super dense coding and state merging. Finally we conclude in section \ref{sec:CONCL}. 

\section{\label{sec:UD} Useful definitions and related concepts}
\noindent In this section, we will briefly introduce the various concepts which are going to be useful and are related to the main theme of our paper. We present these concepts in different subsections.\\

\subsection{\label{sec:UD_RHO} General 2-qubit states}
In this work we have considered bloch representation of generalized two qubit states.
\begin{comment}for obtaining sufficient condition for ACVENN class of states.
\end{comment}
A general two qubit state is represented in the canonical form as,

\begin{equation}
\label{generalRho}
\rho=\frac{1}{4}[\mathbb{I}_2\otimes \mathbb{I}_2 + \sum_{i=1}^{3}r_{i}\sigma_{i}\otimes \mathbb{I}_2 + \sum_{i=1}^{3}s_{i}\mathbb{I}_2\otimes\sigma_{i} \\+ \sum_{i,j=1}^{3}t_{ij}\sigma_{i}\otimes\sigma_{j}],
\end{equation}

where $r_i=Tr[\rho(\sigma_{i}\otimes \mathbb{I}_2)]$,  $s_i=Tr[\rho(\mathbb{I}_2\otimes\sigma_{i})]$ are local Bloch vectors. The correlation matrix is given by $T=[t_{ij}]$ where $t_{ij}=Tr[\rho(\sigma_i\otimes\sigma_{j})]$ with [$\sigma_i;\:i$ = $\{1,2,3\}$] are $2\otimes 2$ Pauli matrices and $\mathbb{I}_2$ denotes identity.
\newline
In this paper, we use the notation \textbf{Q} to denote the set of all two-qubit states.\\

\subsection{\label{sec:UD_AS} Separable and Absolutely separable class of states}
When we go beyond  the one qubit system to two qubit system, we come across the notion of entanglement, the states which can not be written as convex combination of tensor product of one qubit systems. The exact complement of this are those states for which composite system can be written as convex combination of tensor product of subsystems. 
However, the definition is not so straightforward when we go beyond two qubit pure states. For a mixed quantum system consisting of two subsystems the general definition of being  separable is if its density matrix can be written as $\sigma_{sep}$=$\sum \lambda_{i} \sigma^{A}\otimes\sigma^{B}$, ($ \sum \lambda_{i}=1, \lambda_{i} \ge 0 $ ), where $\sigma^{A}$ and $\sigma^{B}$ are density matrices for the two subsystems \textit{A} and \textit{B} \cite{peres-hor}.
The set \textbf{S} will denote the class of separable states. Lately people have identified the class of absolutely separable states denoted by \textbf{AS} \cite{apss1,apss2} which are states that remain separable under all global unitary operations, i.e., \textbf{AS} = \{ $\sigma_{as}$ : \textit{U$\sigma_{as}U^\dagger$} is separable $\forall$ $U$\}

\subsection{\label{sec:UD_AL} Local and Absolutely local class of states}
We denote  the set of all states which do not violate the Bell-CHSH inequality by \textbf{L} \cite{clauser}. Recall that any density matrix in two qubits can be written in the canonical form, where $T$ denotes the correlation matrix corresponding to $\rho$. The function $M$($\rho$) is defined as the sum of the maximum two eigenvalues of $T^tT$. Any state with $M$($\rho$) $\leq$ 1 is considered local with respect to the Bell-CHSH inequality \cite{horobell}. Set of states that do not violate Bell-CHSH inequality is denoted by \textbf{L} = \{ $\sigma_L$ : $M$($\sigma_L$) $\leq$ 1 \}. Recently researchers were able to characterize the states which do not violate Bell-CHSH inequality under any global unitary. This set containing these states are denoted by \textbf{AL} \cite{nir1} and is defined by  \textbf{AL}= \{$\sigma_{al}$ : $M$($U\sigma_{al} U^\dagger$) $\leq$ 1 $\forall U$\}.

\subsection{\label{sec:UD_WIT} Witness operator and Geometric form of Hahn-Banach theorem}
A geometric form of the Hahn-Banach theorem states that given a set that is convex and compact, there exists a hyperplane that can separate any point lying outside the set from the given set \cite{Hahn}.
A witness operator $W$  pertaining to a convex and compact set $S$ will be a hermitian operator that satisfies the following conditions: ({1}) $Tr(W\sigma) \geq$ 0, for all states $\sigma$ $\in$ $S$, ({2}) $Tr(W\chi) <$ 0, for any state $\chi$ $\notin$ $S$ \cite{wit}.

\subsection{\label{sec:UD_SDC} Dense coding capacity}
Quantum super dense coding involves in sending of classical information from one sender to the receiver when they are sharing a quantum resource in the form of an entangled state. More specifically, superdense coding is a technique used in quantum information theory to transmit classical information by sending quantum systems. It is quite well known that if we have a maximally entangled state in $H_d\otimes H_d$ as our resource, then we can send $2 \log d$ bits of classical information. In the asymptotic case, we know one can send $\log d + S(\rho)$ amount of bits. It had been seen that the number of classical bits one
can transmit using a non-maximally entangled state in $H_d\otimes H_d$ as a resource is $(1 + p_0\frac{d}{d-1}) \log d$,
where $ p_0 $ is the smallest Schmidt coefficient. However, when the state is maximally entangled in its subspace then one can send up to
$ 2 \log(d-1) $ bits \cite{ben2,distdc}.\\

%The maximal capacity of dense coding for a two qubit system is given by $X(\rho_{AB})$=log$d_A$-$(S(\rho_{AB})-S(\rho_{A}))$, where $d_A$ is the dimension of the subsystem A. \cite{SDCcapacity}.

\subsection{\label{sec:UD_SM}State merging}
Another important information processing task is state
merging. In the classical setting, the idea of state merging
is essentially the following: Consider two parties Alice and
Bob, where Bob has some prior information $B$ and Alice has
some missing information $A$ (where $A$ and $B$ are random
variables). At this point one important question is: If Bob wants to learn about $A$, how
much additional information Alice does need to send
him? It has been shown that only $H (A | B )$ bits suffices.
In the quantum setting, Alice and Bob each possess a
system in some unknown quantum state with joint density
operator $\rho_{AB}$. Assuming that Bob is correlated with Alice,
one asks how much additional quantum information Alice
needs to send him, so that he has knowledge about the
entire state. The amount of partial quantum information \cite{winter}
that Alice needs to send Bob is given by the quantum
conditional entropy, $ S(A|B)= S(\rho_{AB}) - S(\rho_A) $. Ideally  this conditional entropy can be positive ($S(A|B)>0$), negative ($S(A|B)<0$) and zero ($S(A|B)=0$). If it is positive, it means that sender needs to communicate that number of quantum bits to the receiver; if zero it tells there is no need of such communication. However, if it is negative, the sender and receiver gain the same amount of potential for future quantum communication. 

\section{\label{sec:CHARAC} Characterization of Absolute Conditional von Neumann Entropy Non negative (\textbf{ACVENN}) class}

\noindent In this section we will introduce the class of states for which the conditional von Neumann entropy remains non negative even after the application of global unitary operator. The characterization of these states enables us to identify states which can be made useful for some information processing task. The von Neumann entropy of a system $\rho_{AB}$ with two subsystems \textit{A} and \textit{B} is denoted by $S(\rho_{AB})$. The conditional von Neumann entropy for $ \rho_{AB} $ entropy is defined as $ S(\rho_{AB}) - S(\rho_A) $ , where $S(\rho_A)$ denotes the von Neumann entropy of the subsystem \textit{A}. We note the class of states for which the  conditional von Neumann entropy is non negative. We denote this class by \textbf{CVENN} defined by \textbf{CVENN}= \{$\sigma_{cv}$ : S($ \sigma_{cv} $) - S(($ \sigma_{cv})_A) \geq$ 0\}.\\

%\subsection{\label{sec:level3_0}Definition of ACVENN class}
\noindent \textbf{ACVENN}:The set of states whose conditional von Neumann entropy remains non-negative under any global unitary operations is denoted by \textbf{ACVENN}= \{$ \sigma_{ac}$ : S($U \sigma_{ac} U^\dagger $) - S[($U \sigma_{ac} U^\dagger)_{A}] \geq$ 0, $\forall U$\}.
The von Neumann entropy remains invariant under global unitary transformations, however the conditional entropy can change. We are interested in characterizing the set of states that preserves the non-negativity of the conditional entropy under unitary action on the composite system.

\begin{theorem}
 A state $\sigma_{ac} \in $ \textbf{ACVENN} iff $ S(\sigma_{ac}) \ge 1 $.
\end{theorem}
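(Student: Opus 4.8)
The plan is to exploit the unitary invariance of the von Neumann entropy to collapse the defining condition of \textbf{ACVENN} into a single scalar inequality. Since $S(U\sigma_{ac}U^\dagger)=S(\sigma_{ac})$ for every global unitary $U$, the requirement ``$S(U\sigma_{ac}U^\dagger)-S[(U\sigma_{ac}U^\dagger)_{A}]\ge 0$ for all $U$'' is equivalent to $S(\sigma_{ac})\ge \sup_{U} S[(U\sigma_{ac}U^\dagger)_{A}]$. Thus the whole problem reduces to determining the largest entropy of the $A$-marginal that can be reached along the unitary orbit of $\sigma_{ac}$ inside $\mathbf{Q}$, and I would claim this supremum is always exactly $1$.

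For the ``if'' direction I would argue directly. For any state in $\mathbf{Q}$ the reduced state $(U\sigma_{ac}U^\dagger)_{A}$ is a single-qubit density matrix, so its von Neumann entropy is bounded above by $\log 2 = 1$ (entropies taken to base $2$, consistently with the conventions used earlier). Hence if $S(\sigma_{ac})\ge 1$, then for every $U$ one has $S(U\sigma_{ac}U^\dagger)-S[(U\sigma_{ac}U^\dagger)_{A}]=S(\sigma_{ac})-S[(U\sigma_{ac}U^\dagger)_{A}]\ge 1-1=0$, so $\sigma_{ac}\in\textbf{ACVENN}$.

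The ``only if'' direction contains the one genuine step: I must produce a global unitary that drives the $A$-marginal all the way to the maximally mixed state $\mathbb{I}_2/2$, so that the bound $1$ is actually attained. Writing the spectral decomposition $\sigma_{ac}=\sum_{k=1}^{4}\lambda_{k}\,|e_{k}\rangle\langle e_{k}|$ with $\sum_{k}\lambda_{k}=1$, I would take the four Bell states $\{|\Phi_{k}\rangle\}_{k=1}^{4}$, each of which has both marginals equal to $\mathbb{I}_2/2$, and define $U$ by $U|e_{k}\rangle=|\Phi_{k}\rangle$. Then $U\sigma_{ac}U^\dagger=\sum_{k}\lambda_{k}|\Phi_{k}\rangle\langle\Phi_{k}|$, whose $A$-marginal is $\sum_{k}\lambda_{k}(\mathbb{I}_2/2)=\mathbb{I}_2/2$, with entropy exactly $1$. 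Since $\sigma_{ac}\in\textbf{ACVENN}$ must in particular satisfy the inequality for this $U$, we get $S(\sigma_{ac})\ge 1$, which closes the equivalence.

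The main obstacle is exactly the assertion that $\sup_{U} S[(U\sigma_{ac}U^\dagger)_{A})=1$ for every two-qubit state; once the Bell-basis construction above is in place this is immediate, and in particular no general characterization of which marginal spectra are reachable along a unitary orbit is required. I would also note that the argument is dimension-agnostic: replacing the Bell basis by any orthonormal basis of $\mathbb{C}^{d}\otimes\mathbb{C}^{d}$ consisting of maximally entangled states shows that on $\mathbb{C}^{d}\otimes\mathbb{C}^{d}$ the analogous threshold is $S(\sigma)\ge\log d$, the case $\log 2=1$ being the one relevant here.
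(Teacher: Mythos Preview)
Your proof is correct and follows essentially the same approach as the paper: both use the unitary invariance of $S$ to reduce the \textbf{ACVENN} condition to $S(\sigma_{ac})\ge\sup_{U}S[(U\sigma_{ac}U^\dagger)_A]$, bound the supremum by $1$ via the single-qubit entropy bound, and attain it by unitarily rotating $\sigma_{ac}$ onto a Bell-diagonal state so that the $A$-marginal is $\mathbb{I}_2/2$. Your explicit construction $U|e_k\rangle=|\Phi_k\rangle$ and the $d\otimes d$ generalization are nice touches not spelled out in the paper, but the core argument is the same.
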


\begin{proof}
Let $\sigma_{ac} \in $ \textbf{ACVENN}. Then, S($U \sigma_{ac} U^\dagger $) - S[($U \sigma_{ac} U^\dagger)_{A}] \geq$ 0, $\forall U$. This implies, S($\sigma_{ac}$) - S[($U \sigma_{ac} U^\dagger)_{A}] \geq$ 0, $\forall U$, as von Neumann entropy is invariant under changes in the basis of $\sigma_{ac}$, i.e., S($\sigma_{ac}$)= S($U \sigma_{ac} U^\dagger $) with U being any unitary transformation. Hence, we have S($\sigma_{ac}$) $\ge$ S[($U \sigma_{ac} U^\dagger)_{A}]$, $\forall U$. The maximum value of S[$(U \sigma_{ac} U^\dagger)_{A}]$ is obtained at $(U \sigma_{ac} U^\dagger)_{A}$= $\frac{\mathbb{I}}{2}$ and the maximum value is 1. There always exists a unitary that converts the $\sigma_{ac}$ to a Bell diagonal $\sigma_{bell}$ for a given spectrum.
\begin{comment}This is because two hermitian matrices of the same spectrum are unitarily similar.\end{comment}
And we know that for a Bell diagonal state the reduced subsystem $(\sigma_{bell})_{A}$ is $\frac{\mathbb{I}}{2}$. Therefore, S($\sigma_{ac}$) $\ge$ S[($U \sigma_{ac} U^\dagger)_{A}]$, $\forall U$ $\Rightarrow$ S($\sigma_{ac}$) $\ge$ S[$(\sigma_{bell})_{A}$]= $S(\frac{\mathbb{I}}{2})$= 1. 
\newline
Conversely let $ S(\sigma_{ac}) \ge 1 $, one can note that the maximum achievable von Neumann entropy of a subsystem is 1 in case of two qubit system. As under a unitary transformation, the entropy of the subsystem alone changes. Hence, for any state $\sigma_{ac}$ whose von Neumann entropy is greater than equal to 1, we know that this state cannot have negative conditional entropy under any global unitary operations. Therefore, any state $\sigma_{ac}$, whose $ S(\sigma_{ac}) \ge 1 $ will $\in $ \textbf{ACVENN}.
\end{proof}

One may quickly note the following observations,
\begin{itemize}
\item Any pure separable state has a non-negative conditional entropy and can be brought by some unitary to a maximally entangled state which now possesses a negative conditional entropy and thus pure separable states can never belong to our desired class. Pure entangled states itself have a negative conditional entropy. Therefore, pure states are not eligible members of \textbf{ACVENN}.

\item The fact that some mixed states will be members of \textbf{ACVENN} is exemplified by the maximally mixed state which remains invariant under any global unitary operation and thus preserves the non-negativity of the conditional entropy. However, the maximally mixed state only constitutes a trivial example and we find that the class contains some very non-trivial states.
\\\\
\end{itemize}

\begin{center}\noindent\textbf{Example : A. Werner State}\end{center}

As an example, we first consider the example of Werner state. The density matrix representation of an Werner state is given by,
\begin{equation}
\sigma_{wer}=(1-p)(\mathbb{I}/4)+ p|\psi\rangle \langle \psi|,
\end{equation}
where, $|\psi\rangle=1/\sqrt{2}(|00\rangle +|11\rangle )$ is the Bell state and $p$ is the classical mixing parameter and $\mathbb{I}$ denotes identity.\\

\begin{figure}[h]
    \centering
    \includegraphics[width=70mm,scale=0.5]{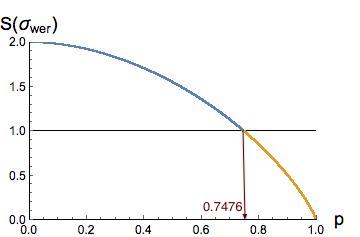}
    \caption{The von Neumann entropy of Werner state $\sigma_{wer}$ against the classical mixing parameter $p$}
    \label{fig:Werner_label}
\end{figure}

In the figure \ref{fig:Werner_label} we have plotted the von Neumann entropy of the Werner state with respect to the mixing parameter $p$. Interestingly, we find that for all values of $p$ $\in$ [0,$\approx$0.7476], we  have $S(\sigma_{wer}) \geq 1$. This clearly indicates the Werner state  for values of $p$ $\in$ [0,$\approx$0.7476] falls within the \textbf{ACVENN} class.\\

\begin{center}\noindent\textbf{Example : B. Bell Diagonal States}\end{center}

Bell-diagonal states can be expressed as, $\sigma_{bell} = \{\vec{0}, \vec{0}, T^b\}$, where $\vec{0}$ is the Bloch vector which is a null vector and the correlation matrix is $T^b=
%(\begin{smallmatrix}
(c_1,c_2,c_3)$
%\end{smallmatrix} )$ 
with $-1\leqslant \text{c}_{i} \leqslant1$.\\

The eigenvalues  $\lambda_{1}$, $\lambda_{2}$, $\lambda_{3}$, $\lambda_{4}$  of Bell diagonal states  are expressed as, $\lambda_{1}=\frac{1}{4}(\chi - 2c_{1})$, $\lambda_{2}=\frac{1}{4}(\chi - 2c_{2})$, $\lambda_{3}=\frac{1}{4}(\chi - 2c_{3})$, $\lambda_{4}=\frac{1}{4}(2 - \chi)$, where $\chi$ = 1 +  $c_{1}$ + $c_{2}$ + $c_{3}$. Therefore, necessary and sufficient condition for a Bell diagonal state to lie in \textbf{ACVENN} is given by  $S(\sigma_{bell}) \geq 1$ which in terms of $c_{1}$, $c_{2}$, $c_{3}$ and $\chi$ becomes
\begin{equation}
\begin{split}
    \label{BDstateBlochparameterscondition}
    \log((\chi - 2c_{2})(\chi - 2c_{3})(2 - \chi)(\chi - 2c_{1})) \\
   +  c_{1}\log(\frac{(\chi - 2c_{2})(\chi - 2c_{3})}{(2 - \chi)(\chi - 2c_{1})}) \\
   +  c_{2}\log(\frac{(\chi - 2c_{3})(\chi - 2c_{1})}{(\chi - 2c_{2})(2 - \chi)}) \\
   +  c_{3}\log(\frac{(\chi - 2c_{2})(\chi - 2c_{1})}{(\chi - 2c_{3})(2 - \chi)}) \leq 4.
\end{split}
\end{equation}

\begin{figure}[h]
    \centering
    \includegraphics[width=70mm,scale=0.5]{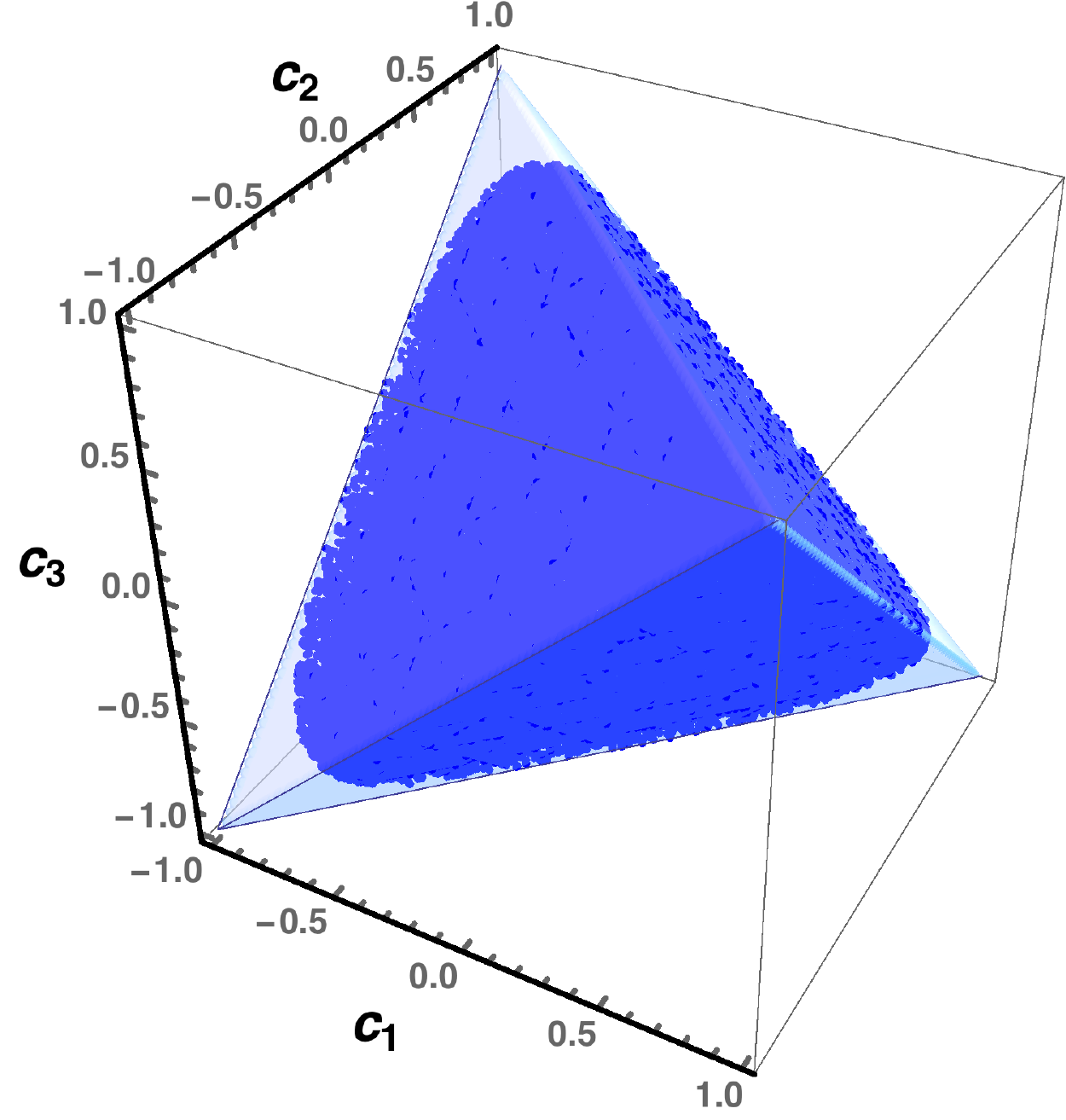}
    \caption{The von Neumann entropy of Bell diagonal state $\sigma_{bell}$ against the parameter $c_i$}
    \label{fig:BDtetra1}
\end{figure}

In figure \ref{fig:BDtetra1}, we consider an exhaustive ensemble  of  $10^5$ states within which the dark blue colour area at the centre of the octahedron determines the class of states for which  $S(\sigma_{bell}) \geq 1$ and falls into our \textbf{ACVENN} class. The light blue areas at the corner are those areas whose conditional entropy can be made negative after the application of some global unitary transformation. It is evident from figure \ref{fig:BDtetra1} that the non negativity of conditional entropy for most of the part of the Bell diagonal states remains invariant after the application of global unitary transformation.
\newline

%\begin{figure}[h]
%    \centering
%    \includegraphics[width=width=70mm,scale=0.5]{TempBellDiag.jpg}
%    \caption{Simulation of Bell Diagonal states}
%    \label{fig:BellDiag_label}
%\end{figure}

%ubsection{\label{sec:level3_2}Illustrations}

\section{\label{sec:CONVEXandCOMP} Convexity and Compactness of the ACVENN class: Existence of Witness}

\noindent In this section we show that the \textbf{ACVENN} class which is a subset of the class \textbf{Q} is a convex and compact set. This helps us identifying the states whose conditional entropy remains negative even after the application of global unitary. We now present the proof that the set  \textbf{ACVENN} is convex and compact.\\

%\subsection{\label{sec:levelCONVEXandCOMP_WIT}Existence of Witness}
\begin{figure}
    \centering
    \includegraphics[width=70mm,scale=0.5]{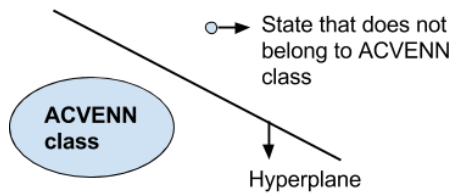}
    \caption{The set ACVENN is convex and compact,
and using the Hahn-Banach theorem \cite{Hahn} it follows that any state
not belonging to ACVENN can be separated from the states that belong to ACVENN by a hyperplane, thus providing for
the existence of a witness.}
    \label{fig:Witness_label}
\end{figure}

\noindent \textbf{Existence of Witness:} The theorems below will support the existence of witness operators to classify \textbf{ACVENN} states from states which are not in \textbf{ACVENN}. See figure \ref{fig:Witness_label} for reference.

\begin{theorem}
\textbf{ACVENN} \textit{is convex.} 
\end{theorem}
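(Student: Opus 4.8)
The plan is to exploit the characterization already established: $\sigma \in \textbf{ACVENN}$ if and only if $S(\sigma) \ge 1$. Thus convexity of \textbf{ACVENN} reduces to showing that the sublevel-type set $\{\sigma \in \textbf{Q} : S(\sigma) \ge 1\}$ is convex, which is exactly the statement that the von Neumann entropy $S$ is a concave function on the (convex) set of two-qubit density matrices. First I would take two arbitrary states $\sigma_1, \sigma_2 \in \textbf{ACVENN}$, so that $S(\sigma_1) \ge 1$ and $S(\sigma_2) \ge 1$, and form the convex combination $\sigma = \lambda \sigma_1 + (1-\lambda)\sigma_2$ for $\lambda \in [0,1]$. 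Since $\sigma_1,\sigma_2$ are valid density matrices, so is $\sigma$, hence $\sigma \in \textbf{Q}$.

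Next I would invoke concavity of the von Neumann entropy, $S(\lambda \sigma_1 + (1-\lambda)\sigma_2) \ge \lambda S(\sigma_1) + (1-\lambda) S(\sigma_2)$, a standard fact (it follows, for instance, from the non-negativity of quantum relative entropy, or from the operator concavity of $-x\log x$). Combining this with the hypotheses gives
\begin{equation}
S(\sigma) \;\ge\; \lambda S(\sigma_1) + (1-\lambda) S(\sigma_2) \;\ge\; \lambda \cdot 1 + (1-\lambda)\cdot 1 \;=\; 1 .
\end{equation}
By the characterization theorem, $S(\sigma) \ge 1$ implies $\sigma \in \textbf{ACVENN}$, which establishes convexity.

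I do not expect a serious obstacle here, since the whole argument rests on the previously proven characterization plus the textbook concavity of $S$. The only point requiring a little care is making sure the characterization is applied in both directions — it supplies both the hypothesis ($S(\sigma_i)\ge 1$) and the conclusion ($S(\sigma)\ge1 \Rightarrow \sigma\in\textbf{ACVENN}$) — and noting that convex combinations of density matrices remain density matrices so that the entropy inequality is even applicable. If one wished to avoid citing concavity as a black box, the fallback would be to prove it via $S(\sigma) = -\mathrm{Tr}(\sigma\log\sigma)$ and the joint convexity of relative entropy, but for the purposes of this paper simply quoting concavity of $S$ is the cleanest route.
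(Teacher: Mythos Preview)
Your proposal is correct and matches the paper's own argument essentially line for line: the paper also takes $\sigma_1,\sigma_2\in\textbf{ACVENN}$, invokes the characterization $S(\sigma_i)\ge 1$, applies concavity of the von Neumann entropy to conclude $S(\lambda\sigma_1+(1-\lambda)\sigma_2)\ge 1$, and then uses the characterization again to place the convex combination in \textbf{ACVENN}. Your write-up is slightly more explicit about the intermediate inequality $\lambda S(\sigma_1)+(1-\lambda)S(\sigma_2)\ge 1$, but the route is the same.
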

\begin{proof}
Consider $ \sigma_1,\sigma_2 \in \textbf{ACVENN}$. Therefore, $ S(\sigma_i) \ge 1 , i=1,2$ . Now by the concavity of von Neumann entropy $ S(\lambda \sigma_1 + (1-\lambda) \sigma_2) \ge 1 $, where $ \lambda \in [0,1] $. Hence, $ \lambda \sigma_1 + (1-\lambda) \sigma_2 \in \textbf{ACVENN}  $ , implying \textbf{ACVENN} is convex.
\end{proof}
\begin{theorem}
\textbf{ACVENN} \textit{is compact subset of} \textbf{Q}.
\end{theorem}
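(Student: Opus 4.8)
The plan is to establish compactness in the standard way for a subset of a finite-dimensional space: show that \textbf{ACVENN} is both bounded and closed inside the real vector space of Hermitian operators on $\mathbb{C}^2\otimes\mathbb{C}^2$. Boundedness is immediate and requires almost no work: every element of \textbf{ACVENN} is a density matrix, hence lies in \textbf{Q}, which is itself a bounded set (e.g. all entries are bounded in modulus by $1$, or equivalently $\|\sigma\|_2 \le 1$). So the entire burden falls on closedness, and since we are in a finite-dimensional normed space, closed plus bounded will give compact via Heine--Borel.

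For closedness I would use the characterization already proved in Theorem~1: $\sigma \in \textbf{ACVENN}$ if and only if $S(\sigma) \ge 1$. Thus $\textbf{ACVENN} = \{\sigma \in \textbf{Q} : S(\sigma) \ge 1\}$, which is the preimage of the closed set $[1,\infty)$ under the von Neumann entropy function $S$ restricted to \textbf{Q}. Since \textbf{Q} is itself closed (it is the intersection of the closed sets $\{\rho = \rho^\dagger\}$, $\{\mathrm{Tr}\,\rho = 1\}$, and $\{\rho \ge 0\}$), it suffices to argue that $S:\textbf{Q}\to\mathbb{R}$ is continuous; then $S^{-1}([1,\infty))$ is closed in \textbf{Q}, hence closed in the ambient space, and the proof concludes.

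The one genuine point to address — and the step I expect to be the main obstacle — is the continuity of $S$ on all of \textbf{Q}, including at rank-deficient states where some eigenvalue hits $0$. The map $\rho \mapsto (\lambda_1(\rho),\dots,\lambda_4(\rho))$ sending a Hermitian matrix to its sorted eigenvalue vector is continuous (eigenvalues depend continuously on matrix entries), and the function $\vec{\lambda} \mapsto -\sum_i \lambda_i \log \lambda_i$ extends continuously to the probability simplex with the convention $0\log 0 = 0$, because $x\log x \to 0$ as $x\to 0^+$. Composing these two continuous maps gives continuity of $S$ on \textbf{Q}. I would state this explicitly rather than gloss over it, since it is precisely the subtlety that makes the argument work. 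With continuity of $S$ in hand, the closedness of $\{\sigma\in\textbf{Q}: S(\sigma)\ge 1\}$ follows, and together with boundedness this shows \textbf{ACVENN} is a compact subset of \textbf{Q}. (As a byproduct, combined with Theorem~2 this gives that \textbf{ACVENN} is convex and compact, which is exactly what is needed for the Hahn--Banach separation and the existence of the witness operator.)
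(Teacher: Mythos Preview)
Your proposal is correct and follows essentially the same route as the paper: use the Theorem~1 characterization $\textbf{ACVENN}=\{\sigma\in\textbf{Q}:S(\sigma)\ge 1\}$, invoke continuity of $S$ to get closedness as a preimage of a closed interval, and combine with boundedness of density matrices to conclude compactness. The only cosmetic differences are that the paper writes the preimage as $f^{-1}[1,2]$ (using $S\le 2$ on two qubits) and cites Fannes' inequality for continuity, whereas you take the preimage of $[1,\infty)$ and spell out continuity via eigenvalue dependence and $x\log x\to 0$.
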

\begin{proof}
Let us define a function \textit{f} : \textbf{Q} $\rightarrow$ $\mathbb{R}$ as
\begin{equation}
\label{function1}
    f(\rho)=S(\rho),
\end{equation}
as \textbf{ACVENN} = \{ $\sigma_{ac}$ : S($\sigma_{ac}$)$\geq$1 \}, and $f$ will have a maximum value of 2, we can say \textbf{ACVENN} = $f^{-1}$[1,2]. $f$ is a continuous function as $S$ is a continuous function \cite{fannes}. Therefore, \textbf{ACVENN} = $f^{-1}$[1,2] is a closed set in \textbf{Q} defined under the trace norm. The set \textbf{ACVENN} is bounded as every density matrix has a bounded spectrum, i.e., their eigen values lies between 0 and 1. This proves that the \textbf{ACVENN} class is compact.
\end{proof}
\noindent The theorem now guarantees the existence of Hermitian operators to succesfully identify states that do not belong to \textbf{ACVENN}.\\

%\begin{itemize}
%\color{magenta}
%\subsection{\label{sec:level4_1}Assessing the size of the \textbf{ACVENN} class}
\noindent Next we estimate the size of the \textbf{ACVENN} class by taking the maximum and minimum distance from the identity ($\frac{\mathbb{I}}{2}\otimes\frac{\mathbb{I}}{2}$). The distance measure we have used in this context is the Frobenius norm which is given by $\|X\|$ = $\sqrt{Tr(X^\dagger X)}$. Having already proved that \textbf{ACVENN} is a convex set, we try to find out the maximum and minimum distance from $\frac{\mathbb{I}}{2}\otimes\frac{\mathbb{I}}{2}$.
\newline
For any general $\widetilde{\varrho}$, distance from $\frac{\mathbb{I}}{2}\otimes\frac{\mathbb{I}}{2}$ is given by $\|\widetilde{\varrho} - \frac{\mathbb{I}}{4}\| = \sqrt{Tr((\widetilde{\varrho} - \frac{\mathbb{I}}{4})^\dagger (\widetilde{\varrho} - \frac{\mathbb{I}}{4}))}$, which on solving further results to $\sqrt{Tr(\widetilde{\varrho}^2) - \frac{1}{4}}$.
\newline
To calculate the maximum distance we needed to maximise $|\sigma - \frac{\mathbb{I}}{4}\|$, over all $\sigma$ $\in$ \textbf{ACVENN}. Here we  solve this problem numerically. After going through $2 \times 10^5$  \textbf{ACVENN} states the maximum distance we have is 0.645966 by the state whose eigen values were $\lambda_{1}$ = 0.809161, $\lambda_{2}$ = 0.0521141, $\lambda_{3}$ = 0.0595448, $\lambda_{4}$ = 0.0791805.
\newline
To calculate the minimum distance we needed to minimise $|\rho - \frac{\mathbb{I}}{4}\|$, over all $\rho$ $\notin$ \textbf{ACVENN}. Going through $1 \times 10^5$ \textbf{non-ACVENN} states numerically, we attained the minimum distance as 0.507225. This is given by a state whose eigen values were $\lambda_{1}$ = 0.00014347, $\lambda_{2}$ = 0.000551157, $\lambda_{3}$ = 0.436523, $\lambda_{4}$ = 0.562783.
\newline
In figure \ref{fig:DistanceACVENN} we show rough estimation of the size of \textbf{ACVENN} class.
\begin{figure}[h]
    \centering
    \includegraphics[width=70mm,scale=0.5]{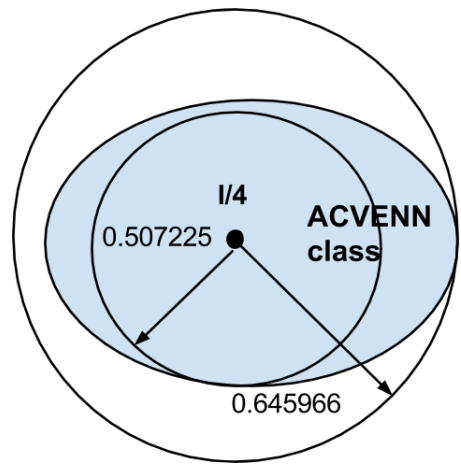}
    \caption{The figure depicts the approximate size of the \textbf{ACVENN} class}
    \label{fig:DistanceACVENN}
\end{figure}
%\end{itemize}

\section{\label{sec:REL}Relation between AS, ACVENN and AL}

\noindent In this section we give a comparative picture of three classes of states. These classes remain invariant from the context of separability \textbf{AS}, non violation of Bell's inequlity \textbf{AL} and the non-negative conditional entropy \textbf{ACVENN} under global unitary transformation.

\subsection{\label{sec:REL_ASvsACVENN}AS vs ACVENN }
The figure \ref{fig:ACVENNvsASvsS} shows the relation between \textbf{AS} , \textbf{ACVENN} and separable states.

\begin{figure}[h]
    \centering
    \includegraphics[width=70mm,scale=0.5]{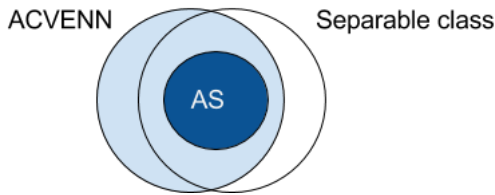}
    \caption{The figure depicts the relation between \textbf{ACVENN}, \textbf{AS} and separable class of states}
    \label{fig:ACVENNvsASvsS}
\end{figure}

\begin{lemma}
\label{lem:ASvsACVENN}
 \textbf{AS} $\subseteq$ \textbf{ACVENN}
\end{lemma}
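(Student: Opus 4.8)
The plan is to reduce the statement to a classical fact about separable states together with the characterization already established for \textbf{ACVENN}. Recall the result of Nielsen and Kempe (and, independently, the Horodeckis) that a separable bipartite state is globally at least as mixed as each of its marginals: if $\rho_{AB}$ is separable then $\rho_{AB}\prec\rho_A$ and $\rho_{AB}\prec\rho_B$ in the majorization order, whence $S(\rho_{AB})\ge S(\rho_A)$. In the language of the present paper this says exactly that $\textbf{S}\subseteq\textbf{CVENN}$, i.e. every separable two-qubit state has non-negative conditional von Neumann entropy.

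With this in hand the inclusion is almost immediate from the definitions. Let $\sigma_{as}\in\textbf{AS}$. By definition $U\sigma_{as}U^\dagger$ is separable for every unitary $U$, so by the inclusion $\textbf{S}\subseteq\textbf{CVENN}$ we get $S(U\sigma_{as}U^\dagger)-S[(U\sigma_{as}U^\dagger)_A]\ge 0$ for every $U$; this is precisely the defining condition for membership in \textbf{ACVENN}, hence $\sigma_{as}\in\textbf{ACVENN}$. Alternatively one can route the argument through Theorem~1: since the von Neumann entropy is unitarily invariant, $S(\sigma_{as})=S(U\sigma_{as}U^\dagger)\ge S[(U\sigma_{as}U^\dagger)_A]$ for all $U$, and choosing the $U$ that brings $\sigma_{as}$ to Bell-diagonal form (so that the marginal becomes $\mathbb{I}/2$, exactly as in the proof of Theorem~1) forces $S(\sigma_{as})\ge 1$; Theorem~1 then gives $\sigma_{as}\in\textbf{ACVENN}$.

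The only step carrying real content is the majorization inequality for separable states; the remainder is bookkeeping with the defining conditions of $\textbf{AS}$, $\textbf{CVENN}$ and $\textbf{ACVENN}$, so I do not anticipate a genuine obstacle. As an independent consistency check one could bypass Nielsen--Kempe and instead use the Verstraete--Audenaert--De~Moor spectral characterization of absolutely separable two-qubit states, namely $\lambda_1\le\lambda_3+2\sqrt{\lambda_2\lambda_4}$ for eigenvalues ordered $\lambda_1\ge\lambda_2\ge\lambda_3\ge\lambda_4$, and then minimize $S=-\sum_i\lambda_i\log\lambda_i$ subject to this constraint and $\sum_i\lambda_i=1$; the minimum is attained on the boundary and equals $\log 3$ (at $\lambda=(1/3,1/3,1/3,0)$), comfortably above $1$, so Theorem~1 again yields the inclusion. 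I would present the first, clean argument as the proof and relegate this numerical check to a remark.
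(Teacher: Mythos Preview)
Your main argument is correct and is essentially the paper's own proof: the paper invokes Cerf--Adami to assert that separability implies non-negative conditional entropy, then closes with the observation that \textbf{AS} states remain separable under every unitary, exactly as you do. The only difference is cosmetic---you appeal to the Nielsen--Kempe majorization criterion to justify $\textbf{S}\subseteq\textbf{CVENN}$, whereas the paper cites the equivalent entropy inequality directly. Your alternative route through Theorem~1 and the spectral consistency check are additional material not in the paper (though your claimed minimum $\log 3$ matches the paper's later numerical estimate of $1.58662$ rather nicely).
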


\begin{proof}
Absolutely separable states preserve separability under any global unitary action. The non-negativity of conditional entropies is a necessary condition for separability \cite{nonNegativeCVN}. All separable states have a non-negative conditional von Neumann entropy. Absolutely separable states(\textbf{AS}) remain separable under any unitary transformation. \textbf{AS} will always have non-negative conditional von Neumann entropy. So it will be a subset of \textbf{ACVENN} class.
\end{proof}

\subsubsection{\label{sec:REL_ASvsACVENN_ILLUS1}Illustration: A. \noindent Absolutely separable Werner states}
Let us consider the Werner states $\sigma_{wer}$ = p$| \psi \rangle \langle \psi |$ + $\frac{1-p}{4}\mathbb{I}$, where, $| \psi \rangle $ is the Bell state $\frac{1}{\sqrt{2}}(| 00 \rangle + | 11 \rangle$). The state $\sigma_{wer}$ belongs to \textbf{ACVENN} for values of $p$ that satisfies the equation, $3(1-p)\log(1-p) + (1 + 3p)\log(1 + 3p) \leq 4$. Solving the inequality we get $p$ $\in$ [0,$\approx$0.7476] as obtained earlier. 
For the states to be in \textbf{AS}, one must have $a_1$ $\leq$ $a_3$ + 2$\sqrt{a_2a_4}$, where $a_1$=$\frac{(1+3p)}{4}$, $a_2$=$\frac{(1-p)}{4}$, $a_3$=$\frac{(1-p)}{4}$, $a_4$=$\frac{(1-p)}{4}$ are all the eigen values in descending order. $\sigma_{wer}$ belongs to \textbf{AS} for values of p $\in$ [0,$\frac{1}{3}$]. This gives an example of an absolutely separable state \textbf{AS} which is contained in \textbf{ACVENN} class. \\

\subsubsection{\label{sec:REL_ASvsACVENN_ILLUS2} Illustration: B. Incoherent in Computational Basis} 
Next, we give an example of a class of states which are incoherent in the computational basis,
\begin{equation}
\label{compu_diag}
\sigma_{comp}=a_1 | 00 \rangle \langle 00 + a_2 | 01 \rangle \langle 01 | + a_3 | 10 \rangle \langle 10 | + a_4 | 11 \rangle \langle 11 |.
\end{equation}

Taking the example of a state with eigenvalues $a_1$ = $\frac{5}{10}$, $a_2$ = $\frac{3}{10}$, $a_3$ = $\frac{2}{10}$, $a_4$ = 0, S($\sigma_{comp}$) = -$\sum$ $a_i$$\log_{2}$$a_i$ $\approx$ 1.485 $\geq$ 1. Hence, this state $\in$ \textbf{ACVENN}.
\newline
We see that $a_1$ $\geq$ $a_2$ $\geq$ $a_3$ $\geq$ $a_4$. For the states to be in \textbf{AS}, one must have $a_1$ $\leq$ $a_3$ + 2$\sqrt{a_2a_4}$. In this specific case $a_1$ = 0.5, $a_3$ + 2$\sqrt{a_2a_4}$ = 0.2. Clearly, this shows that \textbf{AS} is a subset of \textbf{ACVENN} .

\begin{theorem}
\textbf{AS} $\subset$ \textbf{ACVENN}
\end{theorem}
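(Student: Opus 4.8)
The plan is to upgrade the inclusion of Lemma~\ref{lem:ASvsACVENN} to a \emph{proper} one. That lemma already gives $\textbf{AS} \subseteq \textbf{ACVENN}$, so it suffices to exhibit a single state $\sigma^{\star}$ with $\sigma^{\star} \in \textbf{ACVENN}$ but $\sigma^{\star} \notin \textbf{AS}$. The illustrations of this section already furnish such a candidate, so the argument is essentially a verification built on top of the characterization $\sigma \in \textbf{ACVENN} \iff S(\sigma) \ge 1$.

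Concretely, I would take the Werner state $\sigma_{wer} = p\,|\psi\rangle\langle\psi| + \tfrac{1-p}{4}\mathbb{I}$ with $|\psi\rangle = \tfrac{1}{\sqrt{2}}(|00\rangle + |11\rangle)$ and fix a mixing parameter strictly inside the gap between the two relevant thresholds, e.g. $p = \tfrac12$. First step, membership in $\textbf{ACVENN}$: by the characterization theorem this reduces to $S(\sigma_{wer}) \ge 1$, i.e. to $3(1-p)\log(1-p) + (1+3p)\log(1+3p) \le 4$, which holds since $p = \tfrac12 \le 0.7476$, so $\sigma_{wer} \in \textbf{ACVENN}$. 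Second step, non-membership in $\textbf{AS}$: writing the eigenvalues in decreasing order, $a_1 = \tfrac{1+3p}{4}$ and $a_2 = a_3 = a_4 = \tfrac{1-p}{4}$, the two-qubit absolute-separability criterion $a_1 \le a_3 + 2\sqrt{a_2 a_4}$ becomes $\tfrac{1+3p}{4} \le \tfrac{3(1-p)}{4}$, equivalently $p \le \tfrac13$; for $p = \tfrac12$ this fails, hence $\sigma_{wer} \notin \textbf{AS}$.

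Combining the two steps, $\sigma_{wer}\big|_{p=1/2} \in \textbf{ACVENN} \setminus \textbf{AS}$, so the inclusion $\textbf{AS} \subseteq \textbf{ACVENN}$ of Lemma~\ref{lem:ASvsACVENN} is strict, which is the assertion. The incoherent state $\sigma_{comp}$ with spectrum $(\tfrac12, \tfrac{3}{10}, \tfrac15, 0)$ from the second illustration would serve equally well as a witness, since there $a_1 = 0.5 > 0.2 = a_3 + 2\sqrt{a_2 a_4}$ while $S(\sigma_{comp}) \approx 1.485 \ge 1$.

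The only mild obstacle I anticipate is invoking the correct closed-form description of $\textbf{AS}$ in $2\otimes 2$ — the eigenvalue inequality $\lambda_1 \le \lambda_3 + 2\sqrt{\lambda_2\lambda_4}$ with eigenvalues sorted in decreasing order — and making sure the chosen parameter genuinely lies in the interval $(\tfrac13, 0.7476)$ on which the $\textbf{ACVENN}$ condition holds but the $\textbf{AS}$ condition is violated; beyond that, the proof is just arithmetic together with the characterization theorem and Lemma~\ref{lem:ASvsACVENN}.
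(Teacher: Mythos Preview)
Your proposal is correct and follows essentially the same approach as the paper: invoke Lemma~\ref{lem:ASvsACVENN} for the inclusion, then use the Werner-state illustration to exhibit a state in $\textbf{ACVENN}\setminus\textbf{AS}$ (the paper simply points to the illustration, whereas you fix $p=\tfrac12$ and verify the arithmetic explicitly). Your mention of the incoherent state $\sigma_{comp}$ as an alternative witness also mirrors the paper's second illustration.
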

\begin{proof}
In lemma \ref{lem:ASvsACVENN} it has been shown that \textbf{AS} $\subseteq$ \textbf{ACVENN}. In fact, we can say more than that. In view of the example on Werner states in Illustration \ref{sec:REL_ASvsACVENN_ILLUS1}, we have seen that there are states that donot belong to \textbf{AS} but belong to \textbf{ACVENN}. This shows that absolutely separable states (\textbf{AS}) form a proper subset of \textbf{ACVENN} 
\end{proof}

\noindent After proving that \textbf{AS} $\subset$ \textbf{ACVENN} we want to estimate the minimum and maximum entropy recorded by the states belonging to \textbf{AS}. We solve this problem numerically as well. After going through $1 \times 10^5$  \textbf{AS} states the minimum entropy that we obtain is 1.58662. This is attained for a state with eigen values $\lambda_{1}$ = 0.341023, $\lambda_{2}$ = 0.331417, $\lambda_{3}$ = 0.327411, $\lambda_{4}$ = 0.000148614. We already know that the maximum entropy for \textbf{AS} is 2. % for the state at eigen values $\lambda_{1}$ = $\lambda_{2}$ = $\lambda_{3}$ = $\lambda_{4}$ = 0.25. 
This gives us a rough estimate of the volume of \textbf{AS} states lying within the \textbf{ACVENN} in terms of entropy.

\subsection{\label{sec:REL_ALvsACVENN} AL vs ACVENN }
The Werner states are absolutely local for the visibility factor $ p \le 1/\sqrt{2} $, and they belong to \textbf{ACVENN} for $ p \le 0.7476 $. Therefore, the \textit{absolutely Bell-CHSH local} Werner states form a subset of the \textbf{ACVENN} class. This is an interesting result as that would mean, there are states that violate \textit{Bell-CHSH} inequality, and still under any unitary cannot be improved to a state with negative conditional entropy.\\
However, it is difficult to comment in general on the relation between \textbf{AL} and \textbf{ACVENN} class.

\section{\label{sec:APPL}APPLICATIONS: STATE MERGING AND SUPER DENSE CODING}

\noindent In this section we show how characterizing this \textbf{ACVENN} class of states helps to identify the states which are not useful for information processing tasks like super dense coding and state merging are made useful with the help of global unitary transformations. In either of these tasks we are able to detect a class of states which can be converted into super dense coding and state merging resource by applying global unitary transformations. In figure \ref{fig:CompreWitness} we give pictorial description of two types of witness operators that can be created. $W_{S_{D}}$ and $W_{S_{M}}$ act as hyperplanes that detect states useful for superdense coding and state merging respectively from the states belonging to \textbf{ACVENN} class which can never be made useful for these information theoretic tasks by using global unitary operations. 

\begin{figure}[h]
    \centering
    \includegraphics[width=70mm,scale=0.5]{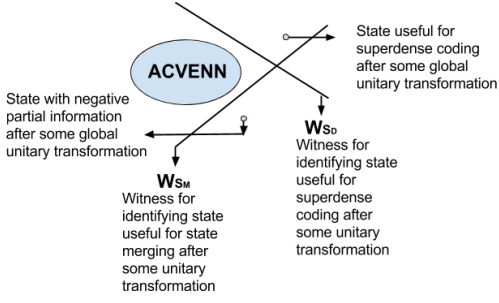}
    \caption{The role of witness in detecting the states which are useful for super dense coding and state merging after global unitary transformation}
    \label{fig:CompreWitness}
\end{figure}

\subsection{\label{sec:APPL_SDC}Super Dense coding}

\noindent In particular, super dense coding capacity for a mixed state $\rho_{AB}$ in $D(H_{d}\otimes H_{d})$ is defined by
\begin{equation}
 {\cal C}_{AB}= max\{\log_2d ,\\  \log_2d+S(\rho_B)-S(\rho_{AB})\},
\end{equation}
where, $\rho_B=tr_A[\rho_{AB}]$ \cite{ben2,distdc}. ${\cal C}_{AB}$ is nothing but the amount of classical information that can be sent from system $A$ to system $B$. Here we note that the expression  $S(\rho_{B})-S(\rho_{AB})$ can either be 
positive or negative. If it is positive then one can use the shared state to transfer bits greater than the classical limit of 
$\log_2d$ bits. This in particular known as the quantum advantage where we can do more than the classical limit. For pure states, $S(\rho_{AB})=0$, then the 
super dense coding capacity is given by,
\begin{equation}
 {\cal C}_{AB}=\log_2d+S(\rho_B)=\log_2d+E(\rho_{AB}),
\end{equation}
where, the entanglement entropy $E(\rho_{AB})$ of a pure state $\rho_{AB}$ is nothing but the von Neumann entropy $S(\rho_{B})$ of the reduced subsystem $\rho_B$. The capacity will be maximum for the Bell states as $S(\rho_{B})$ will be equal to $1$. In a nutshell a state $\rho_{AB}$ for which this expression $S(\rho_{B})-S(\rho_{AB})$ is positive will give us a quantum advantage for superdense coding. In other words, a state with a negative conditional entropy $S(A|B)$ will be useful. It is obvious that not all states will have negative conditional entropy. The next important question is if we apply global unitary operator can we make a state which is not useful for super dense coding to a useful resource. In other words whether we can change the conditional entropy of the state from positive to negative. The answer is yes, however there will be some states for which we can not do that. These set of invariant states are nothing but previously described \textbf{ACVENN} class of states which can never be useful from the perspective of super dense coding, even after the application of global unitary operators. As we have seen previously that this class of state is convex and compact, then in principle it will be possible to create a witness operator ($W_{S_{D}}$ as seen in figure \ref{fig:CompreWitness}) to detect the states which are initially not useful but made useful for superdense coding. It is important to mention here that this witness operator is not an witness operator to detect the states which are useful for super dense coding as opposed to the non useful state. Class of states useful for super dense coding is not a convex and compact set, so is the class of states that are not useful for super dense coding. This witness operator detects those states which need not be useful initially but can be made useful after global unitary transformation.  Further we provide example to show all these kind of states.

\subsubsection{\label{sec:APPL_SDC_ILLUS}Illustrations}
For our first example let's consider a mixed separable state in $D(H_{2}\otimes H_{2})$ given by \cite{wit}
\begin{eqnarray}
\label{eq:chi_mix}
\rho=\begin{vmatrix}a & 0 & b & 0 \\
  0 & 0 & 0 & 0 \\
  b & 0 & 1-a & 0 \\
  0 & 0 & 0 & 0 
\end{vmatrix}.
\end{eqnarray}
The eigen values for this state are $\frac{1-q}{2}$, $\frac{1+q}{2}$, $0$, $0$ and eigen values of the subsystem A are $\frac{1-q}{2}$, $\frac{1+q}{2}$ where $q=\sqrt{1-4a+4a^2+4b^2}$. The state $\rho$ $\in$ \textbf{ACVENN} iff $S(\rho)$ $\geq$ 1. In the current scenario that occurs only when $q=0$. However, for no real values $a$ and $b$ is $q=0$. Therefore we know that for no real values of $a$ and $b$ does this state belong to \textbf{ACVENN}. Thus $S(\rho_{A|B})$=0 for all real values of $a$ and $b$, which is clearly not having negative conditional entropy, therefore, providing no quantum advantage.
But, this on application of the unitary operator,
\begin{eqnarray}
\label{eq:U_chi}
U_{1}=\frac{1}{\sqrt{2}}\begin{vmatrix}1 & 0 & 0 & 1 \\
  0 & \sqrt{2} & 0 & 0 \\
  0 & 0 & \sqrt{2} & 0 \\
  -1 & 0 & 0 & 1
\end{vmatrix},
\end{eqnarray}
becomes,
\begin{eqnarray}
\label{eq:chi_mix}
\rho^{'}=\begin{vmatrix}\frac{a}{2} & 0 & \frac{b}{\sqrt{2}} & \frac{-a}{2} \\
  0 & 0 & 0 & 0 \\
  \frac{b}{\sqrt{2}} & 0 & 1-a & \frac{-b}{\sqrt{2}} \\
  \frac{-a}{2} & 0 & \frac{-b}{\sqrt{2}} & \frac{a}{2} 
\end{vmatrix}.
\end{eqnarray}
While the eigen values of $\rho^{'}$ remains unchanged, the eigen values of the subsystem $\rho^{'}_{A}$ becomes $\frac{1-q^{'}}{2}$, $\frac{1+q^{'}}{2}$ where $q^{'}=\sqrt{1-2a+a^2+2b^2}$. For all the values of $a$ and $b$, where $q$ $>$ $q^{'}$ the state can be made useful for superdense coding. One such example would be when $a=0.5$ and $b=0.4$. Thus, we provide an example of a state which was not useful for super dense coding initially but after a unitary transformation it was made useful for super dense coding. \\

\subsection{\label{sec:APPL_SMER}State merging and Partial Quantum Information}

\noindent In the information processing scenarios, it is important to ask this question: if an unknown  quantum state distributed over two systems say $A$ and $B$, how much quantum communication is needed to transfer the full state to one system. This communication measures the \textit{partial information} one system needs conditioned on it’s prior information. Remarkably, this is given by the conditional entropy $S(A|B)$ (if it is from $A$ to $B$) of the system. It is interesting to note that in principle this entropy can be positive ($S(A|B)>0$), negative ($S(A|B)<0$) and zero ($S(A|B)=0$), where each have different meaning in the context of state merging. If the partial information is positive, its sender needs to communicate this number of quantum bits to the receiver; if zero it tells there is no need of such communication; if it is negative, the sender and receiver instead gain the corresponding potential for future quantum communication. So given a quantum state $\rho_{AB}$, shared between $A$ and $B$, three possible cases arise, and we characterize the state based on these cases, namely for states with $S(A|B)>0$ we denote them $\rho_{S(A|B)>0}$ and similarly other states as $\rho_{S(A|B)<0}$, $\rho_{S(A|B)=0}$.  It is always useful from the information theoretic point of view to look out for the states $\rho_{S(A|B)<0}$ as they have the potential for future communication. It is needless to mention that not all states will be of this type. So the next question that becomes important in this context of global unitary is to find out states which are initially of the type  $\rho_{S(A|B)>0}$ but can be converted to the type $\rho_{S(A|B)<0}$ after global unitary operations. Those states for which the conditional entropy remains positive even after all possible global unitary operations is nothing but the previously defined \textbf{ACVENN} class. Since we have already proved that \textbf{ACVENN} class is always convex and compact, this means that we can detect the states whose partial information can be made negative after the global unitary operation with the help of witness operator ($W_{S_{M}}$ as seen in figure \ref{fig:CompreWitness}). Like in the case of super dense coding  it is important to mention here also, that we are not showing that the set $\rho_{S(A|B)>0}$ is convex and compact but the set for which partial information remains positive after the global unitary operation (say, $U\rho_{S(A|B)}U^{\dagger}>0$ )  is convex and compact. As a result we are not detecting the state for which the partial information is negative instead those states whose partial information can be made negative (say, $U\rho_{S(A|B)}U^{\dagger}<0$ ) after the application of global unitary. We give examples to identify all these classes.

\subsubsection{\label{sec:APPL_SMER_ILLUS}Illustrations}

Lets take a mixed two qubit state,
\begin{equation}
\label{chi_ya}
\rho_{AB}=\frac{3}{4} | 00 \rangle \langle 00 | + \frac{1}{4} |11 \rangle \langle 11 |,
\end{equation}
we see, $S(\rho_{B|A})$=0. After the application of a unitary transformation $U_{2}$=$U^{-1}_{1}$, where $U_{1}$ is defined in equation \ref{eq:U_chi} above. The state $\rho_{AB}$ transforms to,
\begin{equation}
\label{chii_ya}
\rho^{'}_{AB}=\frac{1}{2} | 00 \rangle \langle 00 | + \frac{1}{4} |00 \rangle \langle 11 | +\frac{1}{4} | 11 \rangle \langle 00 | + \frac{1}{2} | 11 \rangle \langle 11 |.
\end{equation}
The $S(\rho^{'}_{B|A})$ is $-0.1887$. Thus we give an example of a state which has non negative conditional entropy initially can be made negative with the help of a unitary transformation  and also in principle one can construct the witness operator to detect such kind of states.\\

In this subsection we also ask this question: \textit{For a given spectrum of density matrix for which states the minimum state merging cost will be achieved?} \\

\begin{theorem}
For a given spectrum of density matrix the minimum state merging cost will be achieved at the Bell diagonal  states .
\end{theorem}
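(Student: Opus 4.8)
The plan is to reduce the statement to the same elementary optimisation that drives Theorem 1. The state merging cost (from $A$ to $B$) is the conditional entropy $S(A|B) = S(\rho_{AB}) - S(\rho_A)$. Once the spectrum $\{\lambda_1,\lambda_2,\lambda_3,\lambda_4\}$ of $\rho_{AB}$ is prescribed, the joint entropy $S(\rho_{AB}) = -\sum_i \lambda_i \log \lambda_i$ is fixed, so minimising the merging cost over all states in \textbf{Q} with that spectrum is equivalent to maximising $S(\rho_A)$.

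First I would invoke the bound $S(\rho_A) \le 1$, valid for every two-qubit state since $\rho_A$ is a single-qubit density operator, with equality iff $\rho_A = \frac{\mathbb{I}}{2}$. Next, for the given spectrum I would write down the Bell-diagonal representative $\sigma_{bell} = \sum_{i=1}^{4}\lambda_i |m_i\rangle\langle m_i|$, with $\{|m_i\rangle\}$ the Bell basis; this manifestly has the prescribed spectrum. The structural input is that each Bell state has reduced state $\frac{\mathbb{I}}{2}$, so by linearity of the partial trace $(\sigma_{bell})_A = \sum_i \lambda_i (|m_i\rangle\langle m_i|)_A = \frac{\mathbb{I}}{2}$, hence $S((\sigma_{bell})_A) = 1$ is maximal. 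Therefore the merging cost at $\sigma_{bell}$ equals $S(\rho_{AB}) - 1$, which by the first step lower-bounds the merging cost of every state sharing that spectrum; the minimum is thus attained at the Bell-diagonal states.

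I do not expect a genuine obstacle: this is the ``state merging'' counterpart of Theorem 1, with the von Neumann entropy of the subsystem again playing the role of the free parameter. The only things requiring care are checking that every admissible spectrum is realised by a Bell-diagonal state (immediate, since the four eigenvalues can be freely assigned to the four Bell vectors) and being explicit about the direction of merging, although the symmetry of the Bell-diagonal marginals, $(\sigma_{bell})_A = (\sigma_{bell})_B = \frac{\mathbb{I}}{2}$, makes the conclusion independent of whether one uses $S(\rho_{AB}) - S(\rho_A)$ or $S(\rho_{AB}) - S(\rho_B)$. One may also remark that Bell-diagonal states need not be the unique minimisers --- any state whose relevant marginal is maximally mixed does as well --- but they form a canonical family meeting the bound for every spectrum, which is precisely the assertion.
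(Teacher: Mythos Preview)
Your proposal is correct and follows essentially the same argument as the paper: fix the spectrum so that $S(\rho_{AB})$ is constant, reduce the minimisation of the merging cost to the maximisation of $S(\rho_A)$, note that this is bounded by $1$ and attained precisely when $\rho_A=\mathbb{I}/2$, and then observe that the Bell-diagonal state built from the prescribed eigenvalues achieves this marginal. Your write-up is in fact slightly more explicit than the paper's (you spell out the construction of $\sigma_{bell}$, the linearity-of-partial-trace step, and the non-uniqueness caveat), but the logic is identical.
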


\begin{proof}
We know that the conditional entropy for the quantum state $\rho_{AB}$ is given by the difference, $S(B|A)=S(\rho_{AB})- S(\rho_A)$. Let us assume that the spectrum $\rho_{AB}$ is fixed with eigen values $a_i$ , $i=1,2,3,4$. Since the spectrum is fixed we have freedom to apply the global unitary operator. Now the question is to minimize $S(B|A)$, by using only global unitary operations. Since the global unitary operations will not change $S(\rho_{AB})$, we need to maximize $S(\rho_A)$. Now $S(\rho_A)$ is maximized if $\rho_A=\mathbb{I}/2$. The reduced density matrices for Bell diagonal state is $\mathbb{I}/2$. Therefore if one reaches Bell diagonal state by some global unitary no further maximization of $S(\rho_A)$ is possible. Thus for a given spectrum of density matrices the minimal merging cost is attained at the Bell-diagonal state.
\end{proof}

\section{\label{sec:CONCL}CONCLUSION}
% \cite{EffectGlobalUnitaryStuff}.

\noindent In this work for a general two qubit system we are able to charecterize class of states \textbf{ACVENN} whose von Neumann entropy will remain positive even after the application of global unitary operator. More specifically, we are able to show that the staes with von Neumann entropy greater than $1$, is the same \textbf{ACVENN} class of states.

We also  find that this class of states is convex and compact, which guarantees the existence of witness for detecting the states which could have initially positive conditional entropy but have negative conditional entropy after the application of unitary operator. This in turn gives the power to identify the states which are not initially useful but can be made useful in the information processing tasks like superdense coding and state merging.

\section{\label{sec:ACKN}ACKNOWLEDGEMENT}

We gratefully acknowledge Prof. Guruprasad Kar for his enriching suggestions.


\begin{thebibliography}{00}

%% \bibitem must have the following form:
%\bibitem{key}...
%%
\bibitem{einstein} A. Einstein, B. Podolsky, and N. Rosen, \href{https://doi.org/10.1103/PhysRev.47.777}{Phys. Rev., \textbf{47} 777 (1935).} 
\bibitem{bell} J. S. Bell, \textbf{1} (3): 195–200 (1964); J. S. Bell, \textit{Speakable and Unspeakable in Quantum Mechanics}, (Cambridge University Press, 1987).

\bibitem{ben3} C. H. Bennett \textit{et al.}, \href{https://doi.org/10.1103/PhysRevLett.70.1895}{Phys. Rev. Lett. \textbf{70}, 1895 (1993)}; R. Horodecki, M. Horodecki, and P. Horodecki, \href{https://doi.org/10.1016/0375-9601(96)00639-1}{Phys. Lett. A \textbf{222}, 21 (1996)} ; I. Chakrabarty, \href{https://doi.org/10.1140/epjd/e2010-00017-8}{Eur. Phys. J. D \textbf{57}, 265-269 (2010).} 

\bibitem{ben2} C. H. Bennett and S. J. Weisner, \href{https://doi.org/10.1103/PhysRevLett.69.2881}{Phys.Rev.Lett. \textbf{69}, 2881 (1992)}; R. Nepal \textit{et al.}, \href{https://doi.org/10.1103/PhysRevA.87.032336}{Phys. Rev. A \textbf{87}, 032336
(2013)}; I. Chakrabarty, P. Agrawal, A.K. Pati, \href{http://www.rintonpress.com/xxqic12/qic-12-34/0271-0282.pdf}{Quantum Information and Computation, \textbf{12}, 0271 (2012).}  % Densecoding

\bibitem{ekert} A. Ekert, \href{https://doi.org/10.1103/PhysRevLett.67.661}{Phys. Rev. Lett. \textbf{67}, 661 (1991).}%E-91

\bibitem{ben1} C. H. Bennett and G. Brassard, \href{https://doi.org/10.1016/j.tcs.2011.08.039}{Proceedings of IEEE International Conference on Computers, System and Signal Processing, Bangalore, India, pp.175-179 (1984)}; %BB-84
P. W. Shor and J. Preskill, \href{https://doi.org/10.1103/PhysRevLett.85.441}{Phys. Rev. Lett. \textbf{85}, 441 (2000).}%Simple Proof of Security of the BB84 Quantum Key Distribution Protocol

\bibitem{secretSharing} M. Hillery, V. Buzek and A. Berthiaume, \href{https://doi.org/10.1103/PhysRevA.59.1829}{Phys. Rev. A \textbf{59}, 1829 (1999)}; Sk Sazim \textit{et al.}, \href{https://doi.org/10.1007/s11128-015-1109-7}{Quantum. Inf. Process. \textbf{14}, 4651 (2015)}; S. Adhikari, I. Chakrabarty and P. Agrawal, \href{http://dl.acm.org/citation.cfm?id=2230976.2230981}{Quant. Inf. Comp. \textbf{12}, 0253 (2012)}; M. Ray, S. Chatterjee, and I. Chakrabarty, \href{https://doi.org/10.1140/epjd/e2016-60683-x}{Eur. Phys. J. D \textbf{70} 114 (2016)}; S. Sazim, C. Vanarasa, I. Chakrabarty, K. Srinathan , \href{https://doi.org/10.1007/s11128-015-1109-7}{Quant. Inf. Proc. \textbf{14},  12,  4651 (2015).}%Teleportation


%\bibitem{horodecki-entanglement} R. Horodecki \textit{et al.}, Rev. Mod. Phys. \textbf{81}, 865 (2009), and references therein. %Quantum Entanglement


\bibitem{sazim-tele} Sk Sazim and I. Chakrabarty, \href{https://doi.org/10.1140/epjd/e2013-30746-9}{Eur. Phys. J. D \textbf{67}, 174 (2013).}

\bibitem{entanglement and others} D. Deutsch \textit{et al.}, \href{https://doi.org/10.1103/PhysRevLett.77.2818}{Phys. Rev. Lett. \textbf{77}, 2818 (1996)}; %quantum privacy amplification; 
L. Goldenberg and L.Vaidman, \href{https://doi.org/10.1103/PhysRevLett.75.1239}{Phys. Rev. Lett. \textbf{75}, 1239 (1995)}; % cryptography with orthogonal states 
A. Cabello, \href{https://doi.org/10.1103/PhysRevA.61.052312}{Phys. Rev. A \textbf{61}, 052312 (2000)}; %Quantum key distribution without alternative measurements
C. Li, H.-S. Song, and L. Zhou, \href{https://doi.org/10.1088/1464-4266/5/2/306}{Journal of Optics B: Quant. semiclass. opt. \textbf{5}, 155 (2003)}; %A random quantum key distribution achieved by using Bell states
A. Orieux \textit{et al.}, \href{https://doi.org/10.1038/srep08575}{Scientific Reports \textbf{5}, 8575 (2015)}; 
A. D'Arrigo \textit{et al.}, \href{https://doi.org/10.1016/j.aop.2014.07.021}{Ann. of Phys. \textbf{350}, 211–224 (2014).}

\bibitem{bose} S. Bose, V. Vedral, and P. L. Knight, \href{https://doi.org/10.1103/PhysRevA.57.822}{Phys. Rev. A \textbf{57}, 822 (1998)}; %Multiparticle generalization of entanglement swapping
M. Zukowski \textit{et al.}, \href{https://doi.org/10.1103/PhysRevLett.71.4287}{Phys. Rev. Lett. \textbf{71}, 4287 (1993).} %‘‘Event-ready-detectors’’ Bell experiment via entanglement swapping

\bibitem{broad} S. Chatterjee, S. Sazim, I. Chakrabarty \href{https://doi.org/10.1103/PhysRevA.93.042309}{Phys. Rev. A \textbf{93}, 042309 (2016).} 

\bibitem{bound}   P.   Horodecki, \href{https://doi.org/10.1016/S0375-9601(97)00416-7}{Phys.   Lett.   A \textbf{232}, 333 (1997)};   M. Horodecki, P. Horodecki, and R. Horodecki, \href{https://doi.org/10.1103/PhysRevLett.80.5239}{Phys. Rev.Lett. \textbf{80}, 5239 (1998).}

\bibitem{peres-hor} A. Peres, \href{https://doi.org/10.1103/PhysRevLett.77.1413}{Phys. Rev. Lett. \textbf{77}, 1413 (1993)}; M. Horodecki, P. Horodecki, and R. Horodecki, \href{https://doi.org/10.1016/S0375-9601(96)00706-2}{Phys. Lett. A \textbf{223}, 1 (1996).}

\bibitem{filt} N.Gisin, \href{https://doi.org/10.1016/S0375-9601(96)80001-6}{Phys. Lett. A \textbf{210}, 151 (1996)}

\bibitem{clauser} J.F. Clauser, M.A. Horne, A. Shimony, R.A. Holt, \href{https://doi.org/10.1103/PhysRevLett.23.880}{Phys. Rev. Lett. \textbf{23}, 880 (1969)}

%\bibitem{env}


\bibitem{wit} S. Adhikari, N. Ganguly, A. S. Majumdar, \href{https://doi.org/10.1103/PhysRevA.86.032315}{Phys. Rev. A \textbf{86}, 032315 (2012)}; N. Ganguly, J. Chatterjee, A.S. Majumdar, 
\href{https://doi.org/10.1103/PhysRevA.89.052304}{Phys. Rev. A \textbf{89}, 052304 (2014)}; O. Guhne, G. Toth, \href{https://doi.org/10.1016/j.physrep.2009.02.004}{Physics Reports \textbf{474}, 1 (2009)}; 
N. Brunner, S. Pironio, A. Acín, N. Gisin, A. A. Methot,
and V. Scarani, 
\href{https://doi.org/10.1103/PhysRevLett.100.210503}{Phys. Rev. Lett. \textbf{100}, 210503 (2008)}; R. Gallego, N. Brunner,
C. Hadley, and A. Acín, \href{https://doi.org/10.1103/PhysRevLett.105.230501}{Phys. Rev. Lett. \textbf{105}, 230501 (2010)}; S. Das, M. Banik, A. Rai, MD R. Gazi, and
S.Kunkri, \href{https://doi.org/10.1103/PhysRevA.87.012112}{Phys. Rev. A \textbf{87}, 012112
(2013)}; A. Mukherjee, A. Roy, S. S. Bhattacharya, S. Das,
Md. R. Gazi, and M. Banik, \href{https://doi.org/10.1103/PhysRevA.92.022302}{Phys. Rev. A \textbf{92}, 022302
(2015)}; M.  Horodecki,  P.  Horodecki,  and  R.  Horodecki,  \href{https://doi.org/10.1016/S0375-9601(96)00706-2}{Phys.
Lett. A \textbf{223}, 1 (1996)}; B. M. Terhal, \href{https://doi.org/10.1016/S0375-9601(00)00401-1}{Phys. Lett. A \textbf{271}, 319 (2000)}; N. Ganguly, S. Adhikari, A. S. Majumdar, J. Chatterjee, \href{https://doi.org/10.1103/PhysRevLett.107.270501}{Phys. Rev. Lett. \textbf{107}, 270501(2011).}

\bibitem{pankaj}  P. Agrawal and B Pradhan, \href{https://doi.org/10.1088/1751-8113/43/23/235302}{J Phys A,  \textbf{43},  23 (2010)} 
 
\bibitem{apss1} K. Zyczkowski, P. Horodecki, A.  Sanpera,  and M.Lewenstein, \href{https://doi.org/10.1103/PhysRevA.58.883}{Phys.Rev. A. \textbf{58}, 883, (1998).}

\bibitem{apss2} F.  Verstraete, K.  Audenaert and B.  De  Moor,
\href{https://doi.org/10.1103/PhysRevA.64.012316}{Phys. Rev. A. \textbf{64}, 012316, (2001)}; N.Johnston, \href{https://doi.org/10.1103/PhysRevA.88.062330}{Phys. Rev. A \textbf{88}, 062330 (2013).}

\bibitem{nir1} N. Ganguly, A.  Mukherjee, A. Roy, S. S. Bhattacharya, B. Paul, K.Mukherjee, \href{https://arxiv.org/abs/1611.05586v2}{arXiv:1611.05586v2.}

\bibitem{nir2}S. S. Bhattacharya, A. Mukherjee, A. Roy, B. Paul, K. Mukherjee, N. Ganguly, \href{https://arxiv.org/abs/1612.07644v2}{arXiv:1612.07644v2.}

\bibitem{Neilsen} M. A Nielsen, Michael A. and I. Chuang (2010), \textit{Quantum computation and quantum information}, (Repr. ed. Cambridge Univ. Press. ISBN 978-1-107-00217-3).

\bibitem{winter}  M. Horodecki, J. Oppenheim, A. Winter,	\href{https://doi.org/10.1038/nature03909}{Nature \textbf{436}, 673 (2005)}


\bibitem{distdc} D. Bruss, G. Mauro D'Ariano, M. Lewenstein, C. Macchiavello, A. Sen De, U. Sen, \href{https://doi.org/10.1103/PhysRevLett.93.210501}{Phys. Rev. Lett. \textbf{93}, 210501 (2004).}

\bibitem{horobell} R.Horodecki , P.Horodecki, M.Horodecki , \href{https://doi.org/10.1016/0375-9601(95)00214-N}{Phys. Lett. A \textbf{200}, 340 (1995).}

\bibitem{Hahn}  R. B. Holmes, \textit{Geometric Functional Analysis and its Ap-
plications}, (Springer Verlag, 1975);  W.  Rudin, \textit{Principles of Mathematical Analysis}, (McGraw Hill, 1976).

\bibitem{fannes} M. Fannes, \href{https://doi.org/10.1007/BF01646490}{Communications in Mathematical Physics \textbf{31} 291–294 (1973).}

\bibitem{nonNegativeCVN} Nicolas J. Cerf, Chris Adami,  \href{https://doi.org/10.1016/S0167-2789(98)00045-1}{Physica D \textbf{120}  62 (1998)}

%\bibitem{horodecki_entanglement} 
%R. Horodecki, P. Horodecki, M. Horodecki, and K. Horodecki,
%"Quantum entanglement", Rev. Mod. Phys. 81, 865
%(2009)

%\bibitem{bell_nonlocality}
\end{thebibliography}
\end{document}